\newcommand{\mc}[1]{\mathcal{#1}}
\newcommand{\nix}[1]{}
\newtheorem{theorem}{Theorem}
\newtheorem{corollary}[theorem]{Corollary}
\newtheorem{lemma}[theorem]{Lemma}
\begin{document}

\title{ Relation Between Surface Codes and Hypermap-Homology Quantum Codes}

\author{Pradeep Sarvepalli}
\email[]{pradeep@ee.iitm.ac.in}
\affiliation{Department of Electrical Engineering, Indian Institute of Technology Madras, Chennai 600 036, India}
\date{March 14, 2014} 

\begin{abstract}
Recently, a new class of quantum codes based on  hypermaps were proposed. These codes are obtained from embeddings of hypergraphs as opposed to surface codes which are obtained from the  embeddings of graphs. 
It is natural to compare these two classes of codes and their relation to each other. 
 In this context  two related questions are addressed in this paper: Can the parameters of hypermap-homology codes be superior to those of surface codes and  what is precisely the relation between these two classes of quantum codes? We show that a canonical hypermap code is identical to a surface code while a noncanonical hypermap code can be transformed to a surface code by CNOT gates alone. Our approach is  constructive; we construct the related surface code and the transformation involving CNOT gates.
\end{abstract}

\pacs{03.67.Pp}
\keywords{quantum codes, topological codes, hypermap-homology codes, surface codes, hypergraphs, hypermap homology}

\maketitle
\section{Introduction}

Surface codes,  proposed by Kitaev \cite{kitaev03}, are an extremely appealing class of codes for fault tolerant quantum computation \cite{dennis02,raussen07}. They have been generalized in various directions \cite{bravyi98,bullock07,bombin07b,bombin07,bombin06,tillich09, zemor09}. 
Recently,  a new generalization of surface codes was proposed in \cite{martin13}. In this approach quantum codes are constructed based on the homology of hypergraphs  rather than the homology of graphs, which is the case in surface codes. 
Hypergraphs are a generalization of graphs; the edges of a hypergraph can be incident on two or more vertices. Just as the embedding of graphs on surfaces gives rise to codes which are topological in nature \cite{kitaev03}, 
embeddings of hypergraphs also give rise to topological quantum codes. These codes have been termed hypermap-homology 
codes in \cite{martin13}; we also refer to them as hypermap codes.  

Arguably, the most popular surface code is Kitaev's toric code defined on a
square lattice of size $n\times n$ with periodic boundary conditions \cite{kitaev03}. This code is a $[[2n^2,2,n]]$ quantum code. On the other hand, Ref.~\cite{martin13} proposed a hypermap homology code defined on the  $n\times n $ square  lattice.  This is  a $[[3n^2/2,2,n]]$ code and 
it is more efficient than Kitaev's toric code with 
respect to the number of physical qubits required to protect the 
same number of logical qubits while maintaining  the same level of error correcting capability. 
This seemed to suggest that better quantum codes maybe obtained by using  hypergraphs. But there are other surface codes with better parameters than the $[[2n^2,2,n]]$ toric code. There exist surface codes whose parameters are $[[n^2+1,2,n]]$, see \cite{bombin07c,kovalev12}. 

This begs the question if hypermap codes improve upon the parameters of best surface codes?
A related question is the exact relation between hypermap codes and surface codes. 
It was also not apparent which codes could be realized using  hypermaps but not by embedding graphs on surfaces. 

In this paper we address these questions. The construction of hypermap codes
requires us to make a choice as to whether we will represent certain boundary maps using the standard basis 
or  not. We call those codes with standard basis as canonical hypermap codes and those with a nonstandard basis 
as  noncanonical hypermap codes. We show that for every canonical hypermap code there is a surface code which is 
identical to the hypermap code. This implies that the $[[3n^2/2,2,]]$ hypermap code, although it improves upon the $[[2n^2,2,n]]$ toric code, it is identical to another surface code.
We also show that every noncanonical hypermap code can be transformed to a surface 
code; we only need CNOT gates for this transformation.  In some cases, a 
noncanonical hypermap code can be identical to a surface code. A hypermap code that cannot be realized by a surface code must be a noncanonical code.

Our results imply that hypermap codes that improve upon surface codes or which cannot be realized as surface codes must be noncanonical hypermap codes.
Through our results, many questions related to hypermap codes can be posed as questions about 
surface codes and we can study  hypermap codes using surface codes. For instance, decoding of these codes can be studied in terms of related surface codes.  

This paper is structured as follows. In Section~\ref{sec:bac} after briefly reviewing surface codes, we give a  self-contained introduction to hypermap codes. Then in Section~\ref{sec:main} we present our main results which clarify the relation between canonical  and noncanonical hypermap codes on one hand
and the relation between hypermap codes and surface codes on the other. We then conclude with a brief discussion on the significance and consequences of our results.

\section{Background} \label{sec:bac}
\subsection{Surface codes}
We assume that the reader is familiar with the basics of quantum codes and stabilizer formalism \cite{gottesman97,calderbank98}. 
Let $\mathsf{G}$ be a graph with vertex set $\mathsf{V}(\mathsf{G})$ and edge set $\mathsf{E}(\mathsf{G})$; when there is no confusion 
we drop the dependence on $\mathsf{G}$. Consider an embedding of the graph on a surface $\Sigma$
i.e. a drawing of the graph on the surface so that no two edges cross each other. Denote the 
faces  of the embedding as $\mathsf{F}(\mathsf{G})$. We restrict our attention only to those embeddings in which the 
faces are homeomorphic to  open  discs. Such an embedding is also called a 2-cell embedding or a map. A 
(stabilizer) quantum code is obtained from the embedding as follows. On each edge we place a qubit. 
 We associate to each vertex an operator called vertex 
operator, denoted by  $A_v$ and to each face an operator, denoted $B_f$. The face operators are also 
sometimes referred to as plaquette operators. These operators are defined 
as follows:
\begin{eqnarray}
A_v &=& \prod_{u\in \delta(v)}X_u, \mbox{  where } \delta(v)=\{ u\mid (u,v) \in \mathsf{E}(\mathsf{G}) \}\label{eq:vertex-op}\\
B_f & =&\prod_{e\in \partial(f)} Z_e,\label{eq:face-op}
\end{eqnarray}
where $ \partial(f) =\{e\mid e \mbox{ is in the boundary of $f$} \}$; while $X$ and $Z$ are the Pauli matrices. 
The surface code is defined as the joint +1-eigenspace stabilized by the operators $A_v$ and $B_f$. 
In other words, it is the quantum code with the stabilizer 
\begin{eqnarray}
S&=& \langle A_v, B_f \mid v\in \mathsf{V}(\mathsf{G}), f\in \mathsf{F}(\mathsf{G})\rangle.
\label{eq:surf-stabilizer}
\end{eqnarray}
The stabilizer matrix of the surface code can be represented by 
$\left[\begin{array}{cc}\mathsf{I}_\mathsf{V}& 0 \\0 & \mathsf{I}_\mathsf{F} \end{array} \right]$, where 
$\mathsf{I}_\mathsf{V}$ is the vertex-edge incidence matrix of $\mathsf{G}$ and $\mathsf{I}_\mathsf{F}$ is the face-edge incidence matrix of $\mathsf{G}$. 
The surface code   is an $[[|\mathsf{E}|, 2g]]$ quantum code, where $g$ is the genus of the surface on which $\mathsf{G}$ is embedded.

\subsection{Hypermap homological codes}

We now review the hypermap-homological codes proposed in \cite{martin13}. The reader can find more details on 
these codes therein. 
Let $\Gamma$ be a hypergraph with vertex set $\mathsf{V}(\Gamma)$ and hyperedge set $
\mathsf{E}(\Gamma)$. A hyperedge is any nonempty subset of the vertex set.  If $\mathsf{E}(\Gamma)$ has only subsets of size two then $\Gamma$  reduces to a standard graph.  (We use Greek alphabet for hypergraphs only.) Hypergraphs are often studied by means of a bipartite graph representation. This bipartite graph is formed as follows:
Form a bipartite graph $\mathsf{G}_\Gamma$ with vertex set $\mathsf{V}(\mathsf{G}_\Gamma) = \mathsf{V}(\Gamma) \cup \mathsf{E}(\Gamma)$. 
Place an edge between $v \in  \mathsf{V}(\Gamma) $ and $e\in \mathsf{E}(\Gamma)$ if and only if  $e$ is incident on $v$.
We refer to the edges of the bipartite graph as darts or half-edges and denote the collection of 
darts by  $\mathsf{E}(\mathsf{G}_\Gamma)$. The darts will also be denoted as $\mathsf{W}(\Gamma)$.
 For any dart $i$, we denote the unique vertex in $\mathsf{V}(\Gamma)$ on which $i$ is incident by $v_{\ni i}$ and the unique hyperedge on which $i$
is incident by $e_{\ni i}$.

Consider the embedding of $\mathsf{G}_\Gamma$ on a surface $
\Sigma$. Denote the faces of  $\mathsf{G}_\Gamma$ by 
$\mathsf{F}(\mathsf{G}_\Gamma)$. An embedding of $\mathsf{G}_\Gamma$ is called 
a hypermap. The labeling of the darts in the hypermap is performed as follows: we place a label to the left of the dart as we move along the dart from a hyperedge to an adjacent vertex in $\mathsf{G}_\Gamma$. Alternatively, we place the label counterclockwise of the dart with respect to hyperedge on which it is 
incident. To distinguish between  vertices and hyperedges of $\Gamma$, 
elements in $\mathsf{E}(\Gamma)$ are shown as squares
whereas elements in $\mathsf{V}(\Gamma)$ are shown as circles, see Fig.~\ref{fig:dart-labeling}. 
If the hypergraph is connected then the bipartite graph is also connected and it is possible to associate a pair of 
permutations $\sigma, \tau \in S_n$ acting on the darts of the hypermap such that the group $\langle \sigma, \tau\rangle $  is transitive on the set of darts. Here $S_n$ is the symmetric group of permutations on $n$
elements. 

Let us consider a simple example. Consider a hypergraph with 2 vertices and 2 hyperedges, where
$\mathsf{V} = \{v_1,v_2 \}$ and 
$\mathsf{E} = \left\{(v_1,v_2,v_1,v_2 ), (v_1,v_2,v_1,v_2 )\right\} =\{e_1,e_2\}$.
The associated bipartite graph, see Fig.~\ref{fig:dart-labeling}, has $4$ vertices and $8$ darts. Vertices  can be repeated in a hyperedge. 
An embedding of this hypergraph, i.e. the embedding of its bipartite graph representation, on a torus has 4 faces, 4 vertices, 8 darts. 
\begin{center}
\begin{figure}[h]
\includegraphics{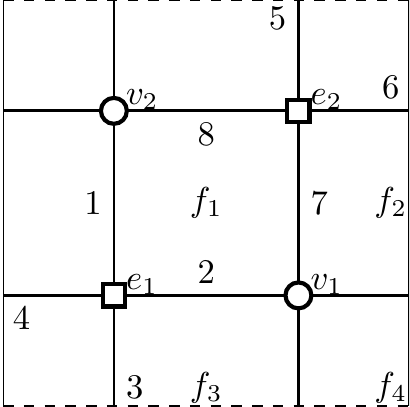}
\caption{A hypergraph embedded on a torus; opposite ends are to be identified. The circles are the vertices of the hypergraph. 
The square vertices are the hyperedges of the hypergraph. The labels are to the left as one moves from squares to the circles.}
\label{fig:dart-labeling}
\end{figure}
\end{center}
In Fig.~\ref{fig:dart-labeling} the permutations  $\sigma $  and $\tau$ are defined as follows:
\begin{eqnarray*}
\sigma &= & (\begin{array}{cccc}1&8&3&6\end{array})(\begin{array}{cccc}2&5&4&7\end{array})\\
\tau& = &(\begin{array}{cccc}1&2&3&4\end{array})(\begin{array}{cccc}5&6&7&8\end{array})\\
\sigma\tau^{-1}&=&(\begin{array}{cc}1&7\end{array})(\begin{array}{cc}2&8\end{array})(\begin{array}{cc}3&5\end{array})(\begin{array}{cc}4&6\end{array}),
\end{eqnarray*}
where $\sigma\tau^{-1}(i) = \sigma(\tau^{-1}(i))$.

The number of orbits of $\sigma $ is the number of vertices of original hypergraph while the number of orbits of $\tau $ is the number of hyperedges of the hypergraph. The number of orbits of $\sigma\tau^{-1}$ is the number of faces in the embedding of the hypergraph.
A dart belongs to a face if and only if the label is in the interior of the face. 
We use the following shorthand for simplicity:
\begin{eqnarray}
 \mathsf{F} &=& \mathsf{F}(\mathsf{G}_\Gamma) =\{ f_1,f_2,\ldots,f_{|\mathsf{F}|}\}, \\
 \mathsf{W} &=&\mathsf{E}(\mathsf{G}_\Gamma)=\{ w_1,w_2,\ldots,w_{|\mathsf{W}|}\}=\mathsf{W}(\Gamma), \\
 \mathsf{E} &=&\mathsf{E}(\Gamma)=\{ e_1,e_2,\ldots,e_{|\mathsf{E}|}\}\\
 \mathsf{V}&= &\mathsf{V}(\Gamma)=\{ v_1,v_2,\ldots,v_{|\mathsf{V}|}\}. 
 \end{eqnarray}
Denote the  binary vector spaces formed by taking $ \mathsf{F}(\mathsf{G}_\Gamma)$, $\mathsf{E}(\mathsf{G}_\Gamma)$, $\mathsf{E}(\Gamma)$ and $\mathsf{V}(\Gamma)$ as bases by 
$\mc{F}$, $\mc{W}$, $\mc{E}$, and $\mc{V}$ respectively. Topological codes are usually defined with respect to a series of boundary maps. We now proceed to define similar maps for the hypermaps with respect to the spaces just introduced. Define a ``boundary'' map for each face, dart, and  hyperedge as follows:
\begin{eqnarray}
d_2(f) & = & \sum_{i\in f}w_i \label{eq:d2}\\
d_1(w_i) &= &v_{\ni i}+v_{\ni \tau^{-1}(i)},\label{eq:d1}\\
\iota(e) &=& \sum_{i\in \delta(e)} w_i\label{eq:imap}
\end{eqnarray}
Recall that $v_{\ni i}$ is the unique vertex on which the dart $i$ is incident and $\delta(e)$ is set of darts incident on $e$. 
We can interpret the  map $d_1$ as giving the vertices of the hypergraph on which the half-edges $i$ and
$\tau^{-1}(i)$ are incident. 
We also define a projection map $p: \mc{W}\rightarrow \mc{W}/\iota(\mc{E})$ which enforces the following relations for each edge $e$:
\begin{eqnarray}
\sum_{i\in \delta(e)} w_i =0,\label{eq:edge-dep}
\end{eqnarray}
where $\delta(e)$ is the set of darts incident on $e$. So
\begin{eqnarray}
\mc{W}/\iota(\mc{E}) & =&\left \langle w_i  \bigg| w_i\in \mathsf{W}; \sum_{j\in \delta(e) } w_j =0 ; e\in \mathsf{E}(\Gamma)\right\rangle\\
p(w) &=& w +\iota(\mc{E})\label{eq:proj-W}
\end{eqnarray}
The hypermap-homology code is defined based on the following functions which lead to the chain complex in 
Eq.~\eqref{eq:chain-complex}:
\begin{eqnarray}
&&\partial_2 =p\circ d_2 \label{eq:doe2}\\
&&\partial_1(w+\iota(\mc{E})) = d_1(w)\label{eq:doe1}\\
&&\begin{CD}\label{eq:chain-complex}
\mc{F} @>\partial_2>> \mc{W}/\iota(\mc{E}) @>\partial_1>> \mc{V}\\
 \end{CD}.
\end{eqnarray}
In Fig.~\ref{fig:embedMaps} we summarize the various functions defined on the hypermap.
\begin{figure}[h]
\includegraphics{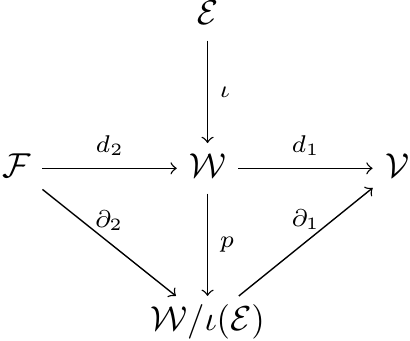}
\caption{Maps relating to the embedding of the hypergraph.}\label{fig:embedMaps}
\end{figure}

In order to define the quantum code we need to identify  a basis for  $\mc{W}/\iota(\mc{E})$. As there are different choices of bases, the maps $\partial_i$ could have different representations for $H_x$ and $H_z$. 
One canonical choice is as follows. Choose a set of darts $\mathsf{S}=\{s_1,s_2,\ldots, s_{m} \}$,  such that (i) $|\mathsf{S}|=|\mathsf{E}|$, and (ii) no two darts $s_i $ and $s_j$ in $\mathsf{S}$ are incident on the same hyperedge. 
These darts are called special darts. A basis for  $\mc{W}/\iota(\mc{E})$ is said to be special if 
it is  $\mathsf{W}\setminus \mathsf{S}$ and nonspecial otherwise. A quantum code of length 
$n=|\mathsf{W}|-|\mathsf{E}|$ is formed from the hypermap as follows.  The map $\partial_2$ can be represented by a $ (|\mathsf{W}|-|\mathsf{E}|)\times |\mathsf{F}|$ matrix $[\partial_2]=H_z^t$. 
The map $\partial_1$ can be represented by a $ |\mathsf{V}|\times (|\mathsf{W}|-|\mathsf{E}|)$ matrix 
$[\partial_1]=H_x$. The quantum code has the
stabilizer matrix 
\begin{eqnarray}
S=\left[\begin{array}{cc}H_x& 0 \\0 & H_z \end{array} \right]=\left[\begin{array}{cc}[\partial_1]& 0 \\0 & [\partial_2]^t \end{array} \right]. \label{eq:stabilizer-matrix}
\end{eqnarray}
For $S$ to define a stabilizer code we require $H_x H_z^t=0$; this is 
 ensured by the  condition $\partial_1 \circ \partial_2  = 0$, see
\cite[Proposition~4.12]{martin13} and the subsequent discussion  for proof.  A  hypermap code is said to be canonical if the basis is special and noncanonical otherwise.

We now illustrate the computations of the various maps and how they are used to construct a 
quantum code with reference to Fig.~\ref{fig:dart-labeling}. 
First identify a set of  special darts one for each hyperedge. Let us choose the set of special darts to be 
$\mathsf{S}=\{w_3,w_7 \}$. This set is not unique. In the present example they are darts below each hyperedge.
 The relations from the hyperedges are 
\begin{eqnarray}
w_1+w_2+w_3+w_4&=&0, \label{eq:equivReln-w3}\\
w_5+w_6+w_7+w_8&=&0. \label{eq:equivReln-w7}
\end{eqnarray} 
They can be used to eliminate the special darts in the computation of the images of $\partial_i$.
\begin{eqnarray}
\partial_2(f_1) &=& p(d_2(f_2))= p(w_2+w_8)=w_2+w_8\\
\partial_2(f_2) &=& p(w_1+w_7) = w_1+w_5+w_6+w_8 \label{eq:doe2f2}\\
\partial_2(f_3) &= &p(w_3+w_5) = w_1+w_2+w_4+w_5\\
\partial_2(f_4) &=& p(w_4+w_6)  = w_4+w_6
\end{eqnarray}
In Eq.~\eqref{eq:doe2f2} we can replace the special dart $w_7$ by linear combinations of nonspecial darts using Eq.~\eqref{eq:equivReln-w7} i.e.  $w_5+w_6+w_7+w_8=0$. Similarly for other faces. Note that $\partial_2(f_4) =\partial_2(f_1)+\partial_2(f_2)+\partial_3(f_3)$, so there are only $|\mathsf{F}|-1$ independent relations. 
 
We have $\mc{W}/\iota(\mc{E}) =  \langle {w}_1,{w}_2, \ldots, {w}_8|w_1+w_2+w_3+w_4=0,w_5+w_6+w_7+w_8=0\rangle$.
 So we can choose $B=\mathsf{W}\setminus \mathsf{S} = \{{w}_1,{w}_2, {w}_4, {w}_5, {w}_6, {w}_8\}$ as a basis for $\mc{W}/\iota(\mc{E})$. This basis is a special basis. The matrix representation of $\partial_2$ with respect to $B$ is $H_z^t$. 
 
  We then compute $\partial_1(w)$ for all $w\in B$.
For instance, $\partial_1(w_1) =v_{\ni 1} +v_{\ni\tau^{-1}(1)} = v_1+v_2$. Linearly, extending the action of $\partial_1$ we can perform   similar computations for any  $w\in \mc{W}/\iota(\mc{E})$. We obtain 
$H_x=\left[\begin{array}{cccccc} 1&1&1 & 1& 1&1 \\  1&1&1 & 1& 1&1 \end{array}\right]$.
 Thus the hypermap-homology code defined by Fig.~\ref{fig:dart-labeling} has the following stabilizer
 matrix. (We can remove the dependent rows of $H_x$ and $H_z$.)
\begin{eqnarray}
\left[\begin{array}{c|c}H_x& 0 \\\hline0 & H_z \end{array} \right] &=& 
\left[\begin{array}{c|c}\begin{array}{cccccc} 1&1&1 & 1& 1&1 \\1&1&1 & 1& 1&1 
\end{array} & 0 \\\hline0 & \begin{array}{cccccc} 
0&1&0 & 0& 0&1 \\
1&0&0 & 1& 1&1 \\
1&1&1 & 1& 0&0 \\
0&0&1 & 0& 1&0 
\end{array} \end{array} \right].\label{eq:hmap-code-stab}
\end{eqnarray}

If we choose an nonspecial basis say $B'=\{w_1, w_2'={w}_1 +{w}_2, {w}_4, {w}_5, {w}_6, {w}_8\}$. Then the matrices
$H_x$ and $H_z$ will be different and the associated quantum codes as well. In this case we compute 
\begin{eqnarray}
\partial_2(f_1) &= & w_2+w_8 = w_1+w_2'+w_8\\
\partial_2(f_2) &=& w_1+w_5+w_6+w_8 \\
\partial_2(f_3) &= &w_1+w_2+w_4+w_5 = w_2'+w_4+w_5\\
\partial_2(f_4) &= &w_4+w_6
\end{eqnarray}
Similarly, we can compute $\partial_1(w_2') = \partial_1(w_1+w_2) = v_{\ni 1}+v_{\ni \tau^{-1}(1)}+
v_{\ni 2}+v_{\ni \tau^{-1}(2)}=0$.
Thus the matrices $H_x$ and$H_z$  are given as 
\begin{eqnarray}
H_x = \left[ \begin{array}{cccccc}
1&0&1&1&1&1\\
1&0&1&1&1&1
\end{array}
\right]&;&
H_z = \left[ \begin{array}{cccccc}
1&1&1&0&0&0\\
1&0&0&1&1&1\\
0&1&1&1&0&0\\
0&0&1&0&1&0
\end{array}
\right].
\end{eqnarray}

Irrespective of the basis  chosen for $\mc{W}/\iota(\mc{E})$, the total number of encoded qubits is a function of the genus of the 
surface. In this sense we encode into the  topological degrees of freedom afforded by the surface.  
The distance on the other hand is not basis invariant. For instance, the hypermap code obtained from 
Fig.~\ref{fig:dart-labeling} with special basis has distance two while the code with nonspecial basis has 
distance one. The distance of hypermap codes was analyzed  when the basis was special \cite{martin13}. 
The distances of hypermap codes with nonspecial bases are a little more difficult to compute and it is not 
necessary they all have the same distance. For a special basis, the distance can be
related to the cycles of the hypermap and its dual, and thus given a topological interpretation, see for example \cite[Corollary~4.23]{martin13}. In case of a 
nonspecial basis such a topological interpretation of the distance has not yet been given. 

The following result  was proved  in  \cite{martin13} although it is not stated
as such therein. 
\begin{theorem}[Hypermap-homology codes\cite{martin13}]\label{th:hmap-codes}
Let $\Gamma$ be a hypergraph with $|\mathsf{E}|$ hyperedges and $\mathsf{G}_\Gamma$ its bipartite graph representation with $|\mathsf{W}|$ darts. The  hypermap obtained by embedding $\mathsf{G}_\Gamma$ on  a surface of genus $g$ leads to a  $[[|\mathsf{W}|-|\mathsf{E}|,2g]]$ quantum code. 
\end{theorem}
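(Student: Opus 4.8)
The plan is to treat the code of Eq.~\eqref{eq:stabilizer-matrix} as a CSS code and read its parameters off the chain complex of Eq.~\eqref{eq:chain-complex}. Its $X$-checks are the rows of $H_x=[\partial_1]$ and its $Z$-checks the rows of $H_z=[\partial_2]^t$, and these commute because $\partial_1\circ\partial_2=0$. First I would fix the block length: $n=\dim\big(\mc{W}/\iota(\mc{E})\big)$, and since every dart is incident on exactly one hyperedge the sets $\delta(e)$ partition $\mathsf{W}$, so the vectors $\iota(e)=\sum_{i\in\delta(e)}w_i$ have pairwise disjoint supports, $\iota$ is injective, and $n=|\mathsf{W}|-|\mathsf{E}|$. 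For a CSS code one has $k=n-\operatorname{rank}H_x-\operatorname{rank}H_z=\dim\ker\partial_1-\dim\operatorname{im}\partial_2$, i.e.\ the dimension of the first homology of Eq.~\eqref{eq:chain-complex}; so the theorem reduces to showing that this homology has dimension $2g$, and I would get that by computing $\operatorname{rank}\partial_1$ and $\operatorname{rank}\partial_2$ separately.

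For $\operatorname{rank}\partial_1$: the map $d_1$ descends to $\partial_1$ because $d_1\big(\iota(e)\big)=0$ --- the darts incident on $e$ form a single $\tau$-orbit, so $i\mapsto\tau^{-1}(i)$ permutes $\delta(e)$, whence $\sum_{i\in\delta(e)}v_{\ni\tau^{-1}(i)}=\sum_{i\in\delta(e)}v_{\ni i}$ over $\mathbb{F}_2$ --- so $\operatorname{im}\partial_1=\operatorname{im}d_1$. But by Eq.~\eqref{eq:d1}, $d_1$ is exactly the $\mathbb{F}_2$ boundary operator of the graph on vertex set $\mathsf{V}$ with one edge per dart $i$ joining $v_{\ni i}$ and $v_{\ni\tau^{-1}(i)}$; around any hyperedge these edges trace a closed walk through all the vertices that hyperedge touches, so the graph is connected whenever $\Gamma$ is, and $\operatorname{rank}\partial_1=|\mathsf{V}|-1$.

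For $\operatorname{rank}\partial_2$ I would first note that $d_2$ is injective: by the labelling convention each dart lies in exactly one face, so by Eq.~\eqref{eq:d2} $d_2$ sends a set $c$ of faces to the characteristic vector of a union of blocks of the partition of $\mathsf{W}$ by faces, which vanishes only for $c=\emptyset$. Hence $\ker\partial_2=d_2^{-1}\big(\iota(\mc{E})\big)\cong d_2(\mc{F})\cap\iota(\mc{E})$. Now $d_2(\mc{F})$ and $\iota(\mc{E})$ are spanned by the characteristic vectors of the partition of the darts by faces and by hyperedges respectively, and a set of darts that is simultaneously a union of faces' dart-sets and a union of hyperedges' dart-sets is exactly a union of the classes of the equivalence relation generated by ``lies in the same face or the same hyperedge''; the number of such classes equals the number of connected components of the bipartite graph on $\mathsf{F}\sqcup\mathsf{E}$ in which $f\sim e$ whenever $f$ contains a dart of $e$. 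Granting that this incidence graph is connected --- which I expect follows from the connectedness and $2$-cell property of $\mathsf{G}_\Gamma$ --- we obtain $\dim\ker\partial_2=1$ and $\operatorname{rank}\partial_2=|\mathsf{F}|-1$.

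Assembling, $k=(|\mathsf{W}|-|\mathsf{E}|)-(|\mathsf{V}|-1)-(|\mathsf{F}|-1)$, and Euler's formula for the $2$-cell embedding of $\mathsf{G}_\Gamma$ --- which has $|\mathsf{V}|+|\mathsf{E}|$ vertices, $|\mathsf{W}|$ edges and $|\mathsf{F}|$ faces --- on the closed orientable genus-$g$ surface gives $(|\mathsf{V}|+|\mathsf{E}|)-|\mathsf{W}|+|\mathsf{F}|=2-2g$, hence $k=2g$. The manipulations with $\iota$, $d_2$, the CSS counting formula and Euler's relation are all routine; the one step that genuinely needs an argument --- and where I expect the real work --- is the connectedness of the face--hyperedge incidence graph (equivalently, that the equivalence relation ``same face or same hyperedge'' on the darts has a single class).
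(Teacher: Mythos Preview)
The paper does not give its own proof of this statement: it attributes Theorem~\ref{th:hmap-codes} to \cite{martin13} and merely states it. So there is nothing in the paper to compare your argument against line by line; what remains is to assess whether your sketch is sound.

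Your argument is correct and essentially complete. The block length computation via injectivity of $\iota$, the CSS dimension formula, the identification $\operatorname{im}\partial_1=\operatorname{im}d_1$ (hence $\operatorname{rank}\partial_1=|\mathsf{V}|-1$), the injectivity of $d_2$, and the Euler step are all fine. The one place you flag as needing work --- connectedness of the face--hyperedge incidence graph, equivalently that the equivalence relation ``same face or same hyperedge'' on darts has a single class --- is in fact immediate once you use the permutation description: the face-classes are the orbits of $\sigma\tau^{-1}$ and the hyperedge-classes are the orbits of $\tau$, so the join of the two partitions is the orbit partition of $\langle \sigma\tau^{-1},\tau\rangle$. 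Since $\sigma=(\sigma\tau^{-1})\tau$, this subgroup equals $\langle\sigma,\tau\rangle$, which is transitive on the darts by hypothesis. Hence there is a single class, $\dim\ker\partial_2=1$, and $\operatorname{rank}\partial_2=|\mathsf{F}|-1$, closing the gap you identified.
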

In Theorem~\ref{th:hmap-codes}, neither the choice of special darts nor the basis for $\mc{W}/\iota(\mc{E})$ 
is explicit,  but these choices must be made  before constructing the quantum code.  As was clear from the example, different choices of bases could lead to different codes with potentially different distances. 

\section{Hypermap codes and surface codes}\label{sec:main}
In this section, we address some of the questions raised in \cite{martin13}. We prove some new results about 
hypermap codes and then relate them to surface codes.   We show that the  hypermap codes with 
special and nonspecial bases are related by transformations involving just CNOT gates. 
Along the way we establish a correspondence between the qubits of the code and the  hypermap \footnote{
Recall that qubits are placed on the edges of the graph in case of surface codes. If we try to associate a 
qubit with  each dart we find that there are more darts than there are qubits. An additional problem is that 
$\mc{W}/\iota(\mc{E})$ has many bases in general. In such a situation each column of $H_x$ and $H_z$ correspond to linear combinations of the  labels of the darts. So at first sight it appears that we cannot have a direct correspondence between the  qubits and the darts. But fortunately this is not the case. 
}; this was only implicit in \cite{martin13}. Finally we relate  the hypermap codes to the surface codes showing an equivalence between  canonical hypermap codes and surface codes.

\subsection{Relation between hypermap codes of different bases}

Consider the bipartite representation of the hypergraph or its embedding. Let $\mathsf{S}$ be the collection of 
special darts. These special darts are $|\mathsf{E}|$ in number and no two of them are incident on the 
same hyperedge. Choose a special basis for $\mc{W}/\iota(\mc{E})$.
A special  basis for $\mc{W}/\iota(\mc{E})$ is of the form 
\begin{eqnarray}
\{w_{i_1},w_{i_2},\ldots, w_{i_n} \} =\mathsf{W}\setminus \mathsf{S} \label{eq:spl-basis}
\end{eqnarray}
where  $n=|\mathsf{W}|-|\mathsf{E}|$; $n$ is also the length of the code.  
Then we can place $n$ qubits, one on each of the nonspecial darts labeled $i_j\in \{i_1,i_2,\ldots, i_n \}$.  
The special darts carry no qubits. (Note that there are more darts than there are qubits. So the labels of the qubits  need  not be the same as labels of the darts on which the qubits are placed. It is possible to relabel the hypermap so that both qubit and dart labels coincide.)
Now define the stabilizer generators using the matrices $H_x$ and $H_z$. The matrix $H_z^t$ is 
simply the face-dart incidence matrix of the hypermap modulo $\iota(\mc{E})$ i.e. relations of the form given 
in  Eq.~\eqref{eq:edge-dep}. In other words, it is constrained to have no special darts. We can view  matrix $H_x$ as 
the vertex incidence matrix of nonspecial darts of the hypermap but the incidence vector is 
found after extending the half-edge $i$ to a full edge by combining $i$ and $\tau^{-1}(i)$. 

We would like to give a similar correspondence for the noncanonical codes and make precise the connection
between canonical and noncanonical hypermap codes. But first we need the
following lemma. 

\begin{lemma}\label{lm:rank-1up-factors}
Let $T$ be an invertible $n\times n $ binary matrix. Then $T$ and $T^{-1}$ can be decomposed as 
\begin{eqnarray}
T&=&R_{j_1}^{i_1} R_{j_2}^{i_2} \cdots R_{j_m}^{i_m},\label{eq:T-factors}\\
T^{-1}&=& R_{j_m}^{i_m}\cdots  R_{j_2}^{i_2}  R_{j_1}^{i_1},\label{eq:Tinv-factors}
\end{eqnarray}
where $R_{j}^{i} =I +e_{i} e_{j}^t$ and $m\leq n^2$.
\end{lemma}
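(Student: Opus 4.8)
The plan is to exhibit the factorization via Gaussian elimination over $\mathbb{F}_2$, using the matrices $R_j^i = I + e_i e_j^t$ as elementary row/column operations. Recall that left-multiplying a matrix by $R_j^i$ (with $i\neq j$) adds row $j$ to row $i$, while right-multiplying adds column $i$ to column $j$; over $\mathbb{F}_2$ these are exactly the transvections, and they are their own inverses: $(R_j^i)^2 = I + e_i e_j^t + e_i e_j^t + e_i(e_j^t e_i)e_j^t = I$ since $e_j^t e_i = 0$ when $i\neq j$. Thus the only elementary operations available are "add one row/column to another," and since $T$ is invertible over $\mathbb{F}_2$ there are no scaling or permutation operations needed — permutations of coordinates can themselves be built from three transvections (the standard $R_j^i R_i^j R_j^i$ swap), so this is not a restriction.

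First I would argue that any invertible $n\times n$ binary matrix $T$ can be reduced to the identity by a sequence of left multiplications by matrices of the form $R_j^i$. Proceed column by column: in column $1$, since $T$ is invertible some entry in that column is $1$; if $T_{11}=0$, add a row with a $1$ in column $1$ to row $1$ to make $T_{11}=1$; then for every other row $i$ with $T_{i1}=1$, add row $1$ to row $i$ to clear it. After this, column $1$ equals $e_1$. Repeat with column $2$ using row $2$ as pivot (the entry $T_{22}$ is now forced to be reachable because the first column is already $e_1$ and $T$ remains invertible), and so on. After $n$ stages we have transformed $T$ into $I$, i.e.
\begin{eqnarray*}
R_{j_m}^{i_m}\cdots R_{j_2}^{i_2} R_{j_1}^{i_1} T = I,
\end{eqnarray*}
where $m$ is the total number of elementary steps used. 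Each of the $n$ stages uses at most one step to set the pivot and at most $n-1$ steps to clear the column, so $m \le n\cdot n = n^2$. Inverting, $T = R_{j_1}^{i_1} R_{j_2}^{i_2}\cdots R_{j_m}^{i_m}$ after relabeling the indices, which is Eq.~\eqref{eq:T-factors}; and since each $R_j^i$ is an involution, $T^{-1} = R_{j_m}^{i_m}\cdots R_{j_1}^{i_1}$ directly gives Eq.~\eqref{eq:Tinv-factors}.

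The only point that needs a little care — and the main obstacle — is the bookkeeping for the bound $m \le n^2$: one has to confirm that once columns $1,\dots,k-1$ have been turned into $e_1,\dots,e_{k-1}$, the leading $k\times k$ block of the partially reduced matrix is still invertible (so a usable pivot exists in column $k$ among rows $k,\dots,n$), and that clearing column $k$ with pivot row $k$ does not disturb the already-finished columns. Both facts follow because all operations after stage $k$ add only rows $k,\dots,n$ into other rows, and those rows are zero in columns $1,\dots,k-1$, so the finished columns are untouched; invertibility of the trailing block is preserved since we only perform invertible operations on an invertible matrix. Counting one pivot-setting step plus at most $n-1$ clearing steps per stage gives $m\le n(n-1)+n = n^2$ (in fact $n^2 - n$ clearing steps plus at most $n$ pivot steps, comfortably $\le n^2$). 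This completes the argument.
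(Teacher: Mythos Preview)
Your proof is correct and follows essentially the same Gaussian-elimination argument as the paper: reduce $T$ to the identity by a sequence of transvections $R_j^i$, use that each $R_j^i$ is an involution to read off the factorizations of $T$ and $T^{-1}$, and count at most $n$ operations per pivot stage to get $m\le n^2$. The only cosmetic difference is that the paper carries out the elimination with \emph{right} multiplications (column operations, clearing one row at a time) whereas you use \emph{left} multiplications (row operations, clearing one column at a time); these are dual and neither buys anything the other does not.
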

\begin{proof}
Multiplying $T$ by $R_j^i$  from right adds the $i$th column to the 
$j$th column of $T$. Denote by $(T)_{i,j}$, the entry in $i$th row and $j$the column of $T$.
Suppose that $(T)_{i,i}=1$, then we can eliminate the nonzero entries $(T)_{i,j}$ in the $i$th row, for $1\leq j\neq i \leq n$ by adding the $i$th column to the $j$th column. If $(T)_{i,i}\neq 1$, then we can find some column $j$ such that $(T)_{i,j}=1$. Such a column must exist because $T$ is full rank. We can first add this column to $i$th column before eliminating the remaining nonzero entries $(T)_{i,j}$. 
If an entry $(T)_{i,j}$ is already zero we do not need to multiply by $R_j^i$. Assume now that 
we have made all entries in the $i$th row zero except the entry $(T)_{i,i}$. Since $T$ is full rank, the 
$(i+1)$st row is not identical to the $i$th row. So we can assume that there is a nonzero entry $(T)_{i+1,j}$ in some column $j\neq i$. Let us eliminate all the non-diagonal entries in $(i+1)$st row, i.e.  we eliminate all the nonzero entries except $(T)_{i+1, i+1}$. In this process  the $i$th row will not be affected because all its non-diagonal entries are zero. Starting from the first row we can reduce $T$ to the identity matrix by 
multiplying by 
matrices of the form $R_j^i$. The product of these matrices must 
equal $T^{-1}$. 
So we have 
\begin{eqnarray*}
T^{-1}&=& \prod_{k=1}^m R_{j_k}^{i_k} \\
T &= & \prod_{k=m}^1 (R_{j_k}^{i_k})^{-1} =  \prod_{k=m}^1R_{j_k}^{i_k},
\end{eqnarray*}
where the last equality follows from the fact that $R_{j}^{i}$ is its own inverse when $i\neq j$; $R_j^i R_j^i = I+e_ie_j^t+e_ie_j^t+e_ie_j^te_ie_j^t= I$.
Relabeling the $i_k$ and $j_k$, we obtain Eqs.~\eqref{eq:T-factors}~and~\eqref{eq:Tinv-factors}.

Each row $i$ requires no more than $n-1$ entries to be made nonzero; accounting for one additional multiplication when $(T)_{i,i}=0$, we need at most $n$ multiplications per row. Thus $m\leq n^2$.
\end{proof}

\begin{theorem} \label{th:hmap-code-equiv}
Suppose that an $[[n,k]]$  canonical hypermap code has  basis $B=\mathsf{W}\setminus \mathsf{S}$ and a noncanonical code has basis $B'=TB$.  The canonical code can be transformed to the noncanonical code by the application of $m \leq n^2$ CNOT gates where the $l$th CNOT gate is applied from
qubit $i_l$ to $j_l$. The number and location of CNOT gates is determined by the decomposition $T=R_{j_1}^{i_1} R_{j_2}^{i_2} \cdots R_{j_m}^{i_m}$, where $R_{j}^{i} =I +e_{i} e_{j}^t$.
\end{theorem}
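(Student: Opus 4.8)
The plan is to track how changing the basis of $\mc{W}/\iota(\mc{E})$ from $B$ to $B'=TB$ transforms the stabilizer matrices $H_x$ and $H_z$, and then to recognize the resulting column operations as exactly the action of CNOT gates at the level of Pauli operators. First I would recall that if $B=\{w_{i_1},\ldots,w_{i_n}\}$ and $B'=TB$, then a vector that has coordinate row-vector $a$ with respect to $B$ has coordinate row-vector $aT^{-1}$ with respect to $B'$ (since $B' = TB$ means each new basis element is a combination of old ones with coefficient matrix $T$). Consequently the matrix $[\partial_1]=H_x$, whose columns are indexed by basis vectors, becomes $H_x T^{-1}$ when we re-express $\partial_1$ in the basis $B'$; and $[\partial_2]=H_z^t$, whose rows are indexed by basis vectors, becomes $T H_z^t$, i.e. $H_z \mapsto H_z T^{-t}$. (One should double-check the exact transpose conventions against Eq.~\eqref{eq:stabilizer-matrix}, but the upshot is that the $X$-part is multiplied on the right by $T^{-1}$ and the $Z$-part by its inverse-transpose, which is precisely the symplectic form of a CNOT network.)

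Next I would invoke Lemma~\ref{lm:rank-1up-factors} to write $T = R_{j_1}^{i_1}\cdots R_{j_m}^{i_m}$ with $m\leq n^2$, hence $T^{-1} = R_{j_m}^{i_m}\cdots R_{j_1}^{i_1}$ (each $R_j^i$ being its own inverse). Right-multiplication of $H_x$ by a single factor $R_j^i = I + e_i e_j^t$ adds column $i$ of $H_x$ to column $j$; on the $Z$-side, multiplication by $R_j^{i-t} = I + e_j e_i^t$ adds row/column appropriately so that column $j$ of $H_z$ is added to column $i$. These two simultaneous column operations are exactly the conjugation action of a CNOT gate with control qubit $i_l$ and target $j_l$ on the check matrix: CNOT$_{i\to j}$ sends $X_i\mapsto X_i X_j$ (so an $X$-stabilizer supported on qubit $i$ gains support on $j$) and $Z_j \mapsto Z_i Z_j$ (so a $Z$-stabilizer supported on $j$ gains support on $i$), with all other generators of the Pauli group fixed. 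I would state this standard symplectic fact (citing the stabilizer-formalism references already in the paper) and then compose the $m$ single-gate steps to conclude that applying the CNOT circuit $R_{j_1}^{i_1},\ldots,R_{j_m}^{i_m}$ in order carries the canonical code $(H_x,H_z)$ to the noncanonical code $(H_x T^{-1}, H_z T^{-t})$, which is by construction the hypermap code in basis $B'$.

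The main obstacle I anticipate is purely bookkeeping: getting the direction of every transpose and inverse right, and making the "coordinates transform by $T^{-1}$" step airtight given that $B$ and $B'$ are bases of a quotient space $\mc{W}/\iota(\mc{E})$ rather than of $\mc{W}$ itself — one must be careful that $T$ is a genuine change-of-basis matrix on the quotient (which it is, by hypothesis $B'=TB$ with $T$ invertible $n\times n$). A secondary point worth a sentence is that CNOT is a Clifford gate, so conjugation does map the stabilizer group to a stabilizer group and hence the code to a genuine stabilizer code; and that the number $m\leq n^2$ is inherited directly from Lemma~\ref{lm:rank-1up-factors}, with the $l$th gate's control/target being exactly $(i_l,j_l)$ from that decomposition. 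Everything else is a routine composition of the single-gate computation, so I would not grind through it in detail.
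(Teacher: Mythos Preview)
Your strategy is exactly the paper's: translate the basis change $B\mapsto B'=TB$ into right-multiplications of $H_x$ and $H_z$, factor $T$ via Lemma~\ref{lm:rank-1up-factors}, and identify each elementary factor $R_j^i$ with a CNOT on the check matrix. The only concrete slip is the change-of-basis direction for $H_x$. With $B'=TB$ the $k$th new basis vector has old-coordinates equal to the $k$th column of $T$, so the matrix of $\partial_1$ in the new basis is $[\partial_1]_{B'}=[\partial_1]_B\,T$, i.e.\ $H_x\mapsto H_xT$, not $H_xT^{-1}$; correspondingly $[\partial_2]_{B'}=T^{-1}[\partial_2]_B$ gives $H_z\mapsto H_zT^{-t}$, as in Eq.~\eqref{eq:basis-change}. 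Your pair $(H_xT^{-1},\,H_zT^{-t})$ fails the sanity check $H_xH_z^t=0$, whereas $(H_xT,\,H_zT^{-t})$ preserves it. Once this is corrected, applying the CNOTs in the stated order $l=1,\dots,m$ yields $H_x\to H_x\,R_{j_1}^{i_1}\cdots R_{j_m}^{i_m}=H_xT$ and $H_z\to H_z\,R_{i_1}^{j_1}\cdots R_{i_m}^{j_m}=H_zT^{-t}$, which is precisely the paper's computation in Eqs.~\eqref{eq:p1}--\eqref{eq:p2}. You anticipated this bookkeeping hazard yourself; fixing the sign of the exponent on $T$ is the only change needed.
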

\begin{proof}
The special basis $B$ for the canonical hypermap code can be given as in Eq.~\eqref{eq:spl-basis}. Then 
we can write the nonspecial basis $B'$ for the noncanonical hypermap code as 
\begin{eqnarray}
B'=\{ T w_{i_1},  T w_{i_2},\ldots, T w_{i_n} \} =T (\mathsf{W}\setminus \mathsf{S}), \label{eq:non-spl-basis}
\end{eqnarray}
where $T$ is an invertible (binary) matrix that transforms the special basis to the nonspecial basis. 
Now the columns of $H_x$ and $H_z$ are indexed by the basis vectors of $\mc{W}/\iota(\mc{E})$.
The relations between the representations of $\partial_i$ for different bases are as follows:
\begin{eqnarray}
[\partial_2]_{B'}=T^{-1}[\partial_2]_{B}  & \text{ and }& [\partial_1]_{B'} = [\partial_1]_{B} T. \label{eq:basis-change}
\end{eqnarray}
This ensures that $H_x$ and $H_z$ are orthogonal because $[\partial_1]_BTT^{-1}[\partial_2]_B =0$.

By Lemma~\ref{lm:rank-1up-factors}, we see that 
\begin{eqnarray}
[\partial_1]_{B'} &=& [\partial_1]_B T = [\partial_1]_B  R_{j_1}^{i_1} R_{j_2}^{i_2} \cdots R_{j_m}^{i_m}\label{eq:p1}\\
{[\partial_2]_{B'} }&=&   [\partial_2]_B(T^{-1})^t = [\partial_2]_B  (R_{j_m}^{i_m}  \cdots R_{j_2}^{i_2} R_{j_1}^{i_1})^t \nonumber\\
&=&[\partial_2]_B (R_{j_1}^{i_1})^t (R_{j_2}^{i_2})^t \cdots (R_{j_m}^{i_m})^t\nonumber\\
&=&[\partial_2]_B R_{i_1}^{j_1} R_{i_2}^{j_2} \cdots R_{i_m}^{j_m}\label{eq:p2}.
\end{eqnarray}
Let us parse  Eqs.~\eqref{eq:p1}~and~\eqref{eq:p2} equations closely. The first says that we add the column $i_l$ to the column
$j_l$ for $[\partial_1]_B$ while we add the column $j_l$ to the column  $i_l $ for $[\partial_2]_B$. This is precisely the action of a CNOT gate acting on qubits $i_l$ and $j_l$ with $i_l$ as control qubit, see for instance \cite[Lemma~2]{grassl03}. This proves that the transformation from the
canonical hypermap code to a noncanonical hypermap code can be achieved by the application of a sequence of CNOT gates
given by the decomposition of $T$.
\end{proof}

We make a few observations regarding the relation between canonical and noncanonical codes.
First the transformation assumes that the codes have been defined with respect to the same set of special darts. Different set of special darts could lead to different codes. With regard to the parameters of the canonical code we can  relate the dimension and distance to topological properties of the surface on which the hypergraph is embedded. For a noncanonical code while the dimension is related to the genus of the surface, the distance does not appear to have such a straightforward relation in general.  More importantly, 
the distance of the canonical code and the noncanonical code need not be the same. The transformation in Theorem~\ref{th:hmap-code-equiv} may not preserve distance. 
It is also possible that the stabilizer generators of the noncanonical code are not local, because there is no restriction on the nonspecial basis.  It is obvious that a noncanonical code can be converted to a canonical code by  application of  CNOT gates in the reverse order.  

If $T$ is simply a permutation matrix, then the code remains canonical. Since $T$ is composed of transformations of the form $R_{i_2}^{i_1}$, it is instructive to study the effect of one such transformation on the canonical code. Assume that we are applying $R_{i_2}^{i_1}$, in other words we are applying a CNOT between qubits $i_1$ and $i_2$, with $i_1$ as the control qubit. 
(We assume without loss of generality that the nonspecial darts have the
labels 1 to $n$ so that we have the same labels for qubits and nonspecial darts.) Suppose 
the qubits $i_1$ and $i_2$ are such that they are on adjacent darts in the hypermap as shown in Fig.~\ref{fig:hmap-face}, then the effect of CNOT on the hypermap is shown in Fig.~\ref{fig:hmap-cnot-adj-edges}.
The boundary of the face to which $i_1$ and $i_2$ belong is modified so that it no longer contains $i_1$.
The resulting code is still canonical with respect to the modified hypermap. 
\begin{center}
\begin{figure}[h]
\includegraphics{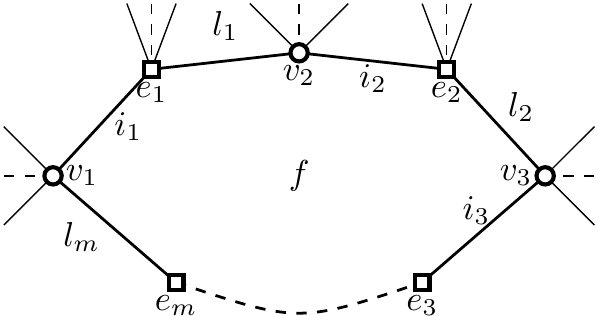}
\caption{A typical face in the embedding of the hypergraph. It has an even number of darts and exactly $|f|/2$ darts have their labels inside the face. The darts $l_\alpha$ and $i_\alpha$ are related as  $i_\alpha = \tau(l_\alpha)$.}
\label{fig:hmap-face}
\end{figure}
\end{center}

\begin{center}
\begin{figure}[h]
\includegraphics{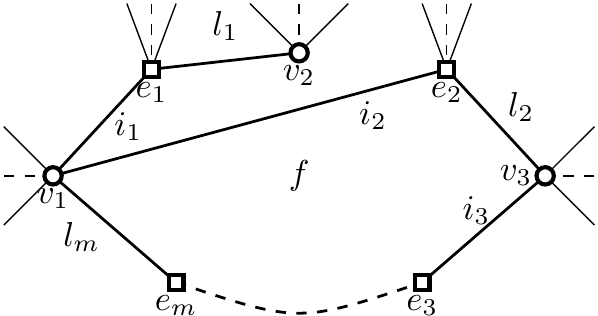}
\caption{The code with the basis $R_{i_2}^{i_1} B$ can be obtained by applying a CNOT gate on qubits $i_1$
and $i_2$ with $i_2$ as the target qubit.. For the special case when these qubits are adjacent, the effect can be understood graphically.  The resulting code is still canonical with respect to a slightly modified hypermap. A CNOT gate between non-adjacent qubits leads to a nontrivial modification of the hypermap in general.}
\label{fig:hmap-cnot-adj-edges}
\end{figure}
\end{center}
If the qubits are not adjacent, then $R_{i_2}^{i_1} B$ does not seem to effect such a simple modification to the hypermap. Understanding this transformation and relating it to the hypermap would be a very useful contribution in this context. 
 A combination of the transformations of $R_j^i$
could lead to a simple modification though. For instance swapping qubits $i_1$ and $i_2$ would lead 
to a relabeling of the hypermap and we would still end up with a canonical code. It would be interesting to find out which transformations $T$ can be described in terms of simple operations on the hypermap.

\subsection{Relation between hypermap codes and surface codes}

Our central result is that a canonical hypermap-homology code can be reduced to a surface code. The graph associated to this surface code can be derived from the hypergraph. Our approach is constructive 
and Algorithm~\ref{proc:hyper2surf} gives this transformation. We only consider hypergraphs 
whose bipartite representations are connected. 
If we are dealing with noncanonical hypermap code, then it can be associated to a canonical hypermap code. 
From Theorem~\ref{th:hmap-code-equiv} these two codes are related by a transformation involving CNOT gates alone.

\begin{theorem}\label{th:hmap-surf-equiv}
Every canonical hypermap-homological code is equivalent to a surface code. Given a hypergraph  $\Gamma$ (embedded on a surface),  Algorithm~\ref{proc:hyper2surf}  outputs a standard graph $\overline{\mathsf{G}}_\Gamma$ (embedded on the same surface) such that the surface code associated to $\overline{\mathsf{G}}_\Gamma$
 has the same stabilizer as the hypermap-homology code associated to $\Gamma$.
\end{theorem}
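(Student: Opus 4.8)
The plan is to construct the graph $\overline{\mathsf{G}}_\Gamma$ explicitly and then check that its vertex--edge incidence matrix $\mathsf{I}_\mathsf{V}$ equals $H_x=[\partial_1]_B$ and its face--edge incidence matrix $\mathsf{I}_\mathsf{F}$ equals $H_z=[\partial_2]_B^t$; since the surface-code stabilizer of $\overline{\mathsf{G}}_\Gamma$ and the hypermap-code stabilizer \eqref{eq:stabilizer-matrix} are both block-diagonal with exactly these two blocks, this yields the theorem. The construction I would use: place a qubit on each nonspecial dart, and for every hyperedge $e$ with darts $j_1,\dots,j_t$ in the $\tau$-cyclic order and special dart $s_e=j_b$, delete the square vertex $e$ and replace the star at $e$ by the path on the vertices $v_{\ni j_b},v_{\ni j_{b+1}},\dots,v_{\ni j_{b-1}}$ (indices cyclic mod $t$) whose $t-1$ edges are the nonspecial darts $j_{b+1},\dots,j_{b-1}$, with the edge labelled $j_a$ joining $v_{\ni j_{a-1}}$ to $v_{\ni j_a}$. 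To see that $\overline{\mathsf{G}}_\Gamma$ is 2-cell embedded on the same surface $\Sigma$, I would realize this move on the embedding as follows: first contract all $|\mathsf{E}|$ special darts --- they form a forest, so this is unambiguous, it keeps a 2-cell embedding on $\Sigma$, and it merges each $e$ into $v_{\ni j_b}$; then, for each hyperedge, apply a sequence of elementary edge slides, each carried out inside a disc, that drag the former square-endpoints of $j_{b+2},\dots,j_{b-1}$ along the newly formed path to their targets. The result is a connected graph $\overline{\mathsf{G}}_\Gamma$ on the vertex set $\mathsf{V}(\Gamma)$ with edge set $\mathsf{W}\setminus\mathsf{S}$, i.e. with $n=|\mathsf{W}|-|\mathsf{E}|$ edges. (Equivalently one may write down the rotation system of $\overline{\mathsf{G}}_\Gamma$ directly in terms of $\sigma$, $\tau$ and $\mathsf{S}$, which turns the face analysis below into a purely combinatorial computation.)

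The identity $\mathsf{I}_\mathsf{V}=H_x$ is then immediate: by \eqref{eq:d1} and \eqref{eq:doe1}, $\partial_1(w_{j_a})=v_{\ni j_a}+v_{\ni\tau^{-1}(j_a)}=v_{\ni j_a}+v_{\ni j_{a-1}}$ for a nonspecial dart $j_a$, which is exactly the pair of endpoints of the edge $j_a$ of $\overline{\mathsf{G}}_\Gamma$; when the two endpoints coincide the edge is a loop and the column is zero, consistent with working over $\mathbb{F}_2$. As the rows of $[\partial_1]_B$ and the vertices of $\overline{\mathsf{G}}_\Gamma$ are both indexed by $\mathsf{V}(\Gamma)$, the two matrices agree column by column.

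The identity $\mathsf{I}_\mathsf{F}=H_z$ is the heart of the argument. I would track the faces through the construction: contracting $s_e$ deletes $s_e$ from every face boundary walk, and the subsequent edge slides that open up $e$ modify the boundary walks in exactly the way the projection $p$ prescribes --- in the end each special dart appearing in a face is replaced by the remaining darts $\delta(e)\setminus\{s_e\}$ of its hyperedge, which is the effect of imposing $\sum_{i\in\delta(e)}w_i=0$. This gives a bijection $\mathsf{F}(\mathsf{G}_\Gamma)\leftrightarrow\mathsf{F}(\overline{\mathsf{G}}_\Gamma)$ under which the edge set bounding $\bar f$ is $p(d_2(f))=\partial_2(f)$, i.e. the column of $[\partial_2]_B$ indexed by $f$; hence $\mathsf{I}_\mathsf{F}=[\partial_2]_B^t=H_z$. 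In particular the number of faces is unchanged, so $\chi(\overline{\mathsf{G}}_\Gamma)=\chi(\mathsf{G}_\Gamma)$ and $\overline{\mathsf{G}}_\Gamma$ has genus $g$; by the surface-code parameter count of Section~\ref{sec:bac} it is an $[[n,2g]]$ code, in agreement with Theorem~\ref{th:hmap-codes}. Combining the two incidence identities, the surface code associated to $\overline{\mathsf{G}}_\Gamma$ has precisely the stabilizer \eqref{eq:stabilizer-matrix} of the canonical hypermap code, and a noncanonical code is handled by first passing to its associated canonical code and then composing with the CNOT circuit of Theorem~\ref{th:hmap-code-equiv}.

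The step I expect to be the main obstacle is the face analysis just described. The $\mathsf{I}_\mathsf{V}$ identity and the chain-complex bookkeeping are routine, but one must verify carefully that opening up a square is genuinely a disc-local modification --- so that orientability and the genus, hence the surface, are unchanged --- and that the induced edits to the face boundary walks realize the relations $\sum_{i\in\delta(e)}w_i=0$ exactly, including corner cases such as a face containing the labels of several special darts, a nonspecial dart traversed twice by one boundary walk, or an edge of $\overline{\mathsf{G}}_\Gamma$ that is a loop. Phrased combinatorially via the rotation system of $\overline{\mathsf{G}}_\Gamma$, the crux is to show that the orbits of its face permutation are in bijection with those of $\sigma\tau^{-1}$ and, read as subsets of $\mathsf{W}\setminus\mathsf{S}$, equal the vectors $p(d_2(f))$.
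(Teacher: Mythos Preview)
Your proposal is correct and reaches the same graph $\overline{\mathsf{G}}_\Gamma$ as the paper, with the same overall architecture (verify $\mathsf{I}_\mathsf{V}=H_x$ edge by edge, then show $\mathsf{I}_\mathsf{F}=H_z$ by a face analysis), but the construction of the embedding is genuinely different. The paper does not contract and slide; instead it \emph{adds} a new edge inside each hypermap face for every label $i\in f$, joining $v_{\ni i}$ to $v_{\ni\tau^{-1}(i)}$, then deletes all the original darts and square vertices. This produces an intermediate $2$-cell embedding $\mathsf{G}''_\Gamma$ in which each hyperedge $e$ has literally become a face bounded by the cycle of new edges labelled by $\delta(e)$, while each hypermap face $f$ has become a face $f''$ bounded by the new edges labelled by $\{i:i\in f\}$. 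The last step deletes the special edge $s_e$ from each such cycle, which is a single face-merge of the hyperedge-face $e$ into the unique $f''$ on the other side of $s_e$; the boundary of the merged face is then read off immediately as $\sum_{i\in f,\,i\notin\mathsf{S}}w_i+\sum_{i\in f,\,i\in\mathsf{S}}\sum_{j\in\delta(e_{\ni i}),\,j\neq i}w_j=\partial_2(f)$, with the $\mathbb{F}_2$ cancellations handled automatically. Your contract-then-slide realization lands on the same rotation system, and your description of the end effect (``each special dart is replaced by $\delta(e)\setminus\{s_e\}$'') is exactly this formula; but verifying it your way requires tracking, for each hyperedge, which face every successive slide passes through and how the expanding face swallows the corners at the merged vertex one by one --- precisely the delicate bookkeeping you flag as the main obstacle. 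The paper's add-then-delete route buys a cleaner face analysis because the hyperedge-faces appear as honest $2$-cells in $\mathsf{G}''_\Gamma$ and only one merge per hyperedge is needed; your route buys a more elementary description of the embedding (contractions and slides are standard map operations) and makes the ``same surface'' claim immediate, at the cost of a more intricate face computation.
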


\renewcommand{\algorithmicrequire}{\textbf{Input:}}
\renewcommand{\algorithmicensure}{\textbf{Output:}}
\begin{algorithm}[H]
\caption{{\ensuremath{\mbox{ Surface code from a canonical hypermap  code}}}}\label{proc:hyper2surf}
\begin{algorithmic}[1]
\REQUIRE {A hypergraph $\Gamma$, and a set of special darts $\mathsf{S} \subset \mathsf{W}(\Gamma)$, such that 
$|\mathsf{S}|=|\mathsf{E}(\Gamma)|$ and every  dart in $\mathsf{S}$ is incident on a distinct hyperedge.}
\ENSURE {A graph $\overline{\mathsf{G}}_\Gamma$ such that the surface code of $\overline{\mathsf{G}}_\Gamma$   is same as the canonical hypermap-homology code.}
\STATE Form $\mathsf{G}$, the bipartite representation of the hypergraph. Note that $\mathsf{E}(\mathsf{G}_\Gamma) =\mathsf{W}(\Gamma)$.
\STATE For each hyperedge $e$ of the hypergraph choose one special dart ${s_e\in \mathsf{S}}$ such that $s_e\in \delta(e)$.
\STATE Embed $\mathsf{G}$ on a suitable surface; let the genus of the surface be $g$.
\STATE In each face $f$ of the embedding, draw new edges connecting vertex nodes of the hypermap. In other words, for each dart $i\in f$ join $v_{\ni i}$ to $v_{\ni \tau^{-1}(i)}$. Denote this graph by $\mathsf{G}'_\Gamma$.
\STATE Each face of $\mathsf{G}_\Gamma$ now contains  $|f|/2$ triangles; each triangle consists of two darts and one newly added edge. Exactly one label is present in each triangle. Label the new edge by that label. 
\STATE Modify  $\mathsf{G}'_\Gamma$ by deleting all the darts and the vertices associated to each hyperedge. Denote this graph by 
$\mathsf{G}''_\Gamma$
\STATE For each hyperedge $e$ there is a special dart ${s_e}$. Delete  the  edge in $\mathsf{G}''_\Gamma$ which has this label.
Denote the resulting graph as $\overline{\mathsf{G}}_\Gamma$.
\STATE The surface code defined by embedding of $\overline{\mathsf{G}}_\Gamma$ gives the same quantum code as the hypermap. The  stabilizer of the surface code is defined using Eqs.~\eqref{eq:vertex-op}--\eqref{eq:surf-stabilizer}.

\end{algorithmic}
\end{algorithm}

\begin{proof}
We show that Algorithm~\ref{proc:hyper2surf} leads to the same stabilizer code as the hypermap-homology code. More precisely, we show that  that the matrices $H_x$ and $H_z$ arising in the hypermap-homology construction are exactly the vertex-edge and face-edge incidence matrices of the graph $\overline{\mathsf{G}}_\Gamma$. 

Consider first the stabilizer of the hypermap-homolgy code. Each face leads to a $Z$-only stabilizer generator.  Note that  ${H}_z$ is a $(|\mathsf{W}|-|\mathsf{E}|)\times |\mathsf{F}|$ matrix.
Although there are only $|\mathsf{F}|-1$ independent generators though, see \cite{martin13}, we include the generator from each face in $H_z$.
Let $f$ be a face of the hypermap.  The labeling of the darts in the hypermap is such that only half the darts
that constitute the boundary of $f$ have their labels inside $f$. This is illustrated in Fig.~\ref{fig:hmap-face}.

We can choose the set of nonspecial darts as a basis for $\mc{W}/\iota(\mc{E})$.  If $w_i$ is not special then 
we can write $p(w_i)=w_i$ otherwise  we can write $p(w_i)= \sum_{j\neq i\in \delta (e_{\ni i})} w_j$.
The latter follows from the relation $\sum_{j\in \delta(e)} w_j=0 $ for every  hyperedge $e$.
Then we obtain 
\begin{eqnarray}
\partial_2(f) &=& p (d_2(f)) = p\left(\sum_{i\in f} w_i\right)\label{eq:f-bndry-hmap}\\
& =&\sum_{\stackrel{i\in f}{i\not\in \mathsf{S}} } p(w_i)+\sum_{\stackrel{i\in f}{i\in \mathsf{S}} } p(w_i)
=\sum_{\stackrel{i\in f}{i\not\in \mathsf{S}} }w_i+ \sum_{\stackrel{i\in f}{i\in \mathsf{S}} } \sum_{\stackrel{j\in \delta(e_{\ni i})}{j\neq i }} w_j\nonumber
\end{eqnarray}
The row associated to $\partial_2(f)$ in $H_z$ is simply the characteristic vector of $\partial_2(f)$.

Now let us consider the $Z$-type stabilizer generators of $\overline{\mathsf{G}}_\Gamma$. Let us begin with
the graph $\mathsf{G}'_\Gamma$. With respect the hypermap it has additional edges. The addition of edges between $v_{\ni i}$ and $v_{\ni \tau^{-1}(i)}$  transforms the face $f$ in Fig.~\ref{fig:hmap-face} as shown in Fig.~\ref{fig:hmap-face-newedges}. This creates $|f|/2$
new (triangular) faces within $f$, each of which is bounded by two darts of the hypermap and one new edge. Furthermore,
exactly one of the labels is contained in each of these new faces and every label is contained in some 
triangle. Therefore, we can label each   new edge by a unique label; furthermore note that these new edges 
are are exactly $|\mathsf{W}|$ in number. The face $f$ is modified so that it only contains the new edges 
and  the vertices of the hypergraph in its boundary. We  label this derived  face also by $f'$. The  derived face $f'$ has only the newly added edges in its boundary. 
In fact we have $\partial(f')=\sum_{i\in f'} w_i$, where the boundary is with respect to  $\mathsf{G}'_\Gamma$. The boundary of $f'$ in $\mathsf{G}'_{\Gamma}$ is same as the boundary of $f$ in the hypermap.

\begin{center}
\begin{figure}[h]
\includegraphics{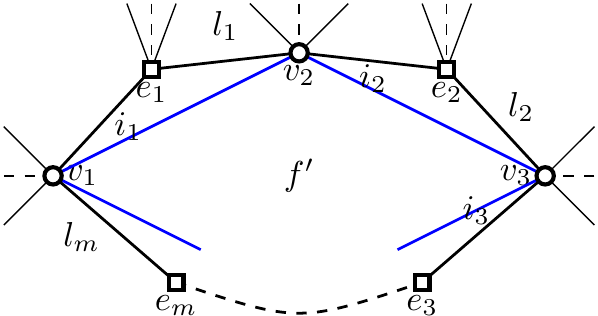}
\caption{(Color online) Illustrating the addition of edges between $v_{\ni i}$ and $v_{\ni \tau^{-1}(i)}$. The additional edges are shown in color. We call them edges to distinguish them from the darts (half-edges) of the hypermap. Observe that $\partial(f')=\partial(f)$.}
\label{fig:hmap-face-newedges}
\end{figure}
\end{center}

Let us consider the transformation of the hyperedges due to the addition of the new edges. Because of the 
edges added between the vertices $v_{\ni i}$ and $v_{\ni \tau^{-1}(i)}$, exactly one label is adjacent to any 
new edge. So we can label the newly added edges by the label in the triangle, this is illustrated in 
Fig.~\ref{fig:hmap-hedge-transformation-1}. 
\begin{center}
\begin{figure}[h]
\includegraphics{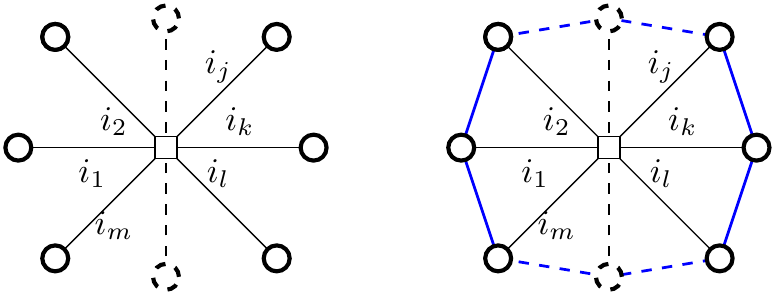}
\caption{A hyperedge in the embedding of the hypergraph $\Gamma$. The addition of edges leads to a creation of triangular faces each containing exactly one label. }
\label{fig:hmap-hedge-transformation-1}
\end{figure}
\end{center}
The deletion of hyperedges and the darts incident on them leaves each face $f'$ unchanged. So $f'$ is also a 
face of $\mathsf{G}''_\Gamma$. So we can denote without ambiguity the face derived from $f'$ as $f''$.
Furthermore, deletion of the hyperedges and the
darts transforms each hyperedge  to a face in $\mathsf{G}''_{\Gamma}$, see
Fig.~\ref{fig:hmap-hedge-transformation-2}. Thus exactly 
$|\mathsf{E}|$ new faces are added to $\mathsf{G}''_\Gamma$ with respect to the hypermap; and they can be indexed by the hyperedges.
\begin{center}
\begin{figure}[h]
\includegraphics{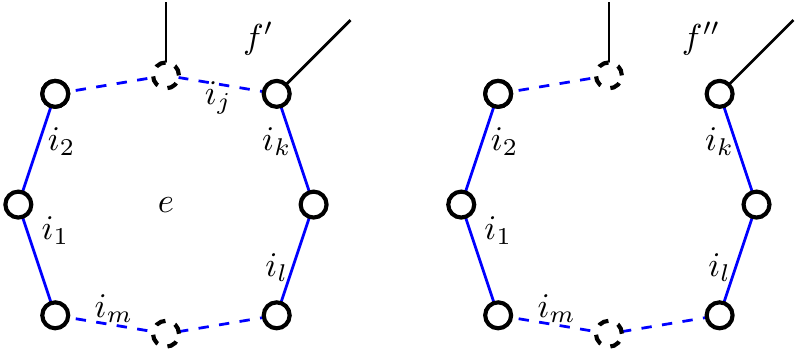}
\caption{The deletion of the darts and the hyperedge creates a new face with exactly one special edge. On the deletion of the special edge, say $i_j$, it gets merged with exactly one face $f$. The boundary of $f$ now includes the rest of the edges of $e$ i.e. excepting the special edge $i_j$.}
\label{fig:hmap-hedge-transformation-2}
\end{figure}
\end{center}
Consider any such face $e$. Only one of the edges in the boundary of $e$  has the same label as some special dart. We call such edges special edges. 
Now when such a special edge in the boundary of $e$ is deleted, then $e$ is merged with exactly one face of 
$f''$ of $\mathsf{G}''_\Gamma$, since an edge is shared only by two faces, see 
Fig.~\ref{fig:hmap-hedge-transformation-2}. While $f''$ can be merged with many faces $e_i$ derived from the hyperedges,  it is not 
merged with any other face derived from the faces of the hypermap,  because such derived faces do not share 
any edges.
Thus $\overline{\mathsf{G}}_\Gamma$ has exactly as many faces as the hypermap and the face derived from the
merging of $f''$ and $e$ (and possibly other faces which share a special edge with $f''$) can be labeled uniquely as $\overline{f}$.    
 Let us look at the boundary  $\overline{f}$ in $\overline{\mathsf{G}}_\Gamma$. 
The boundary of $f''$ in $\mathsf{G}''_\Gamma$  is the same as the boundary of $f$ in $\mathsf{G}_\Gamma$ and is equal to $\sum_{i\in f} w_i$. When all the special edges are deleted, $f''$ may be merged with other faces  $e\in \mathsf{F} (\mathsf{G}''_\Gamma)$ which share a special edge with $f''$. The 
 resulting face $\overline{f}$ has a boundary 
that is given by union of their boundaries, (the sum is take modulo 2). The boundary of $f''$ is 
$\sum_{i\in f} w_i$. Each special edge $i_j$ that is removed causes the boundary to include the remaining edges bounding $e_{\ni i_j}$. The boundary of $e$ is $\sum_{i\in \delta(e)}w_i$.
Thus the boundary of $\overline{f}$ in $\overline{\mathsf{G}}_\Gamma$ is 
$\sum_{\stackrel{i\in f}{i\not\in S}} w_i + \sum_{\stackrel{i\in f}{i\in S}} \sum_{\stackrel{j\in \delta(e_{\ni i})}{j\neq i}}w_j$, which is precisely $\partial_2(f)$, the boundary of $f$ in the hypermap, as 
computed in  Eq.~\eqref{eq:f-bndry-hmap}.
Thus the stabilizer generator associated with  a face in $\overline{\mathsf{G}}_\Gamma$ is same as the 
generator associated to its parent face in the hypermap.  This shows that $H_z$ is the same as the 
face-edge incidence matrix of  $\mathsf{G}_\Gamma$.

It remains now to show that the matrix $H_x$ is the same as the vertex-edge incidence matrix of 
$\overline{\mathsf{G}}_\Gamma$. The matrix $H_x$ is determined by the map $\partial_1$ and
acts on the space  $\mc{W}/\iota(\mc{E})$. As mentioned earlier,  we can take the nonspecial darts as a 
basis for $\mc{W}/\iota(\mc{E})$.  Then the columns of $H_x$ are given by  the characteristic vector of 
$\partial_1(w_i)$, where $i\not\in S$.  We have $\partial_1(w_i) = v_{\ni i }+v_{\ni \tau^{-1}(i)}$. But 
the vertices $v_{\ni i}$ and $v_{\ni \tau^{-1}(i)}$  are connected by an edge in 
$\overline{\mathsf{G}}_\Gamma$. This edge is also labeled $i$  in $\overline{\mathsf{G}}_\Gamma$. 
Hence, we can obtain the $i$th column of 
$H_x$ by considering the incidence vector of every edge in $\overline{\mathsf{G}}_\Gamma$. The columns 
put together then give the vertex-edge incidence matrix of $\overline{\mathsf{G}}_\Gamma$.
Thus the surface code generated by embedding of $\overline{\mathsf{G}}_\Gamma$ has the same stabilizer as 
the hypermap-homology code on $\Gamma$. This proves that every hypermap-homology code can be realized by an 
equivalent  surface code. 
\end{proof}

Although Theorem~\ref{th:hmap-surf-equiv} does not mention, the choice of special darts is made explicit in 
Algorithm~\ref{proc:hyper2surf}. We give a simple example that illustrates the application of Theorem~\ref{th:hmap-surf-equiv}. Consider the hypermap given in Fig.~\ref{fig:dart-labeling}.  Draw additional edges in each face connecting two adjacent circles
as one goes around the face. The hypermap in Fig.~\ref{fig:dart-labeling} is modified as show in Fig.~\ref{fig:hmap-code-2-surf-code-1}.

\begin{center}
\begin{figure}[h]
\includegraphics[scale=0.9]{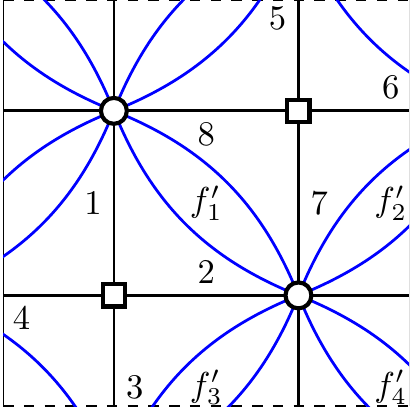}
\caption{(Color online) The graph $\mathsf{G}'_{\Gamma}$ for the hypermap in Fig.~\ref{fig:dart-labeling}.
It is obtained by the addition of new edges to the hypermap. Any two adjacent darts and a newly added edge form a triangle which contains exactly one label; therefore the newly added edge can be uniquely identified by a label.}
\label{fig:hmap-code-2-surf-code-1}
\end{figure}
\end{center}

Next we modify the graph in Fig.~\ref{fig:hmap-code-2-surf-code-1} as follows. We remove all the original darts and edges of the hypergraph. 
In addition we also remove the special darts. These transformations are shown in Figs.~\ref{fig:hmap-code-2-surf-code-2}~and~\ref{fig:cube-surf-code} respectively.
\begin{figure}[h]
 \centering
 \subfigure[]{
 \includegraphics[scale=.9]{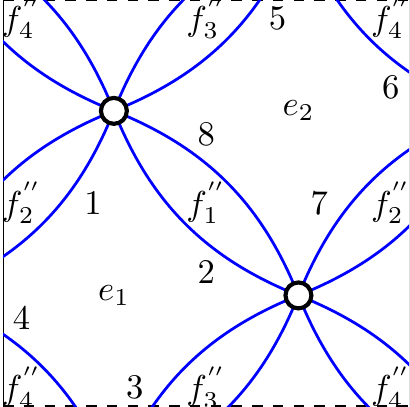}
    \label{fig:hmap-code-2-surf-code-2}
}
\subfigure[]{

 \includegraphics[scale=.9]{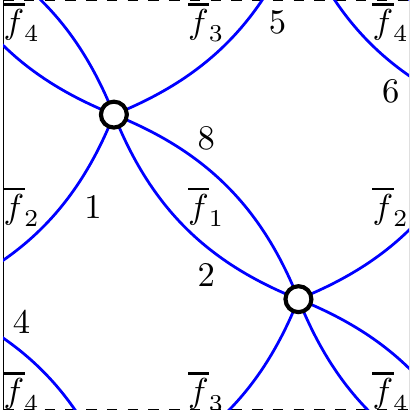}
   \label{fig:cube-surf-code}
   }
 \caption[Transforming the hypermap-homology code to a surface code.]{
(Color online) (a) The graph $\mathsf{G}''_{\Gamma}$ obtained by the removal of the darts and hyperedge-vertices in $\mathsf{G}'_{\Gamma}$; the hyperedges are transformed to faces in $\mathsf{G}''_{\Gamma}$. (b) The graph 
$\overline{\mathsf{G}}_\Gamma$ obtained by removing the special edges i.e. \{3,7 \}; each hyperedge-face $e_i$ is merged with exactly one face. $\overline{\mathsf{G}}_\Gamma$ has the same stabilizer as the hypermap-homology code of Fig.~\ref{fig:dart-labeling}.}
 \label{fig:insertRank3}
\end{figure}	
The hypermap-homology code obtained from Fig.~\ref{fig:dart-labeling} is  identical to the surface code obtained from Fig.~\ref{fig:cube-surf-code}. Consider $\mathsf{I}_\mathsf{V}$ and $\mathsf{I}_\mathsf{F}$, the vertex-edge and face-edge incidence matrices of $\overline{\mathsf{G}}_\Gamma$ in Fig.~\ref{fig:cube-surf-code}:
\begin{eqnarray}
\mathsf{I}_\mathsf{V}=\left[\begin{array}{cccccc} 1 & 1 & 1 & 1 & 1&1\\
1 & 1 & 1 & 1 & 1&1 \end{array} \right] &\quad&
\mathsf{I}_{\mathsf{F}}=\left[\begin{array}{cccccc} 
0&1&0 & 0& 0&1 \\
1&0&0 & 1& 1&1 \\
1&1&1 & 1& 0&0 \\
0&0&1 & 0& 1&0 
\end{array} \right]\label{eq:surf-code-stab}
\end{eqnarray}
The stabilizer matrix of the surface code is $\left[\begin{array}{cc} \mathsf{I}_\mathsf{V}&0 \\0 & \mathsf{I}_\mathsf{F} \end{array}\right]$.
We can see that this is the same as the stabilizer matrix of the hypermap-homology code given in Eq.~\eqref{eq:hmap-code-stab}.

The substance of Theorem~\ref{th:hmap-surf-equiv} is that the procedure illustrated works in general and 
reduces a canonical hypermap-homology code to a surface code. 
The converse of Theorem~\ref{th:hmap-surf-equiv}, i.e. every surface code is also a (canonical) hypermap-homology code, is straightforward. 

\begin{corollary}\label{co:canonical-hmap-surf-equiv}
Every canonical hypermap-homology code is equivalent to a surface code and vice versa. 
\end{corollary}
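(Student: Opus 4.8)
The forward implication is precisely Theorem~\ref{th:hmap-surf-equiv}, so the plan is to prove only the converse: every surface code is a canonical hypermap-homology code. Given a connected graph $\mathsf{G}$ two-cell embedded on a surface $\Sigma$ of genus $g$, I would regard $\mathsf{G}$ as a hypergraph $\Gamma$ all of whose hyperedges have size two. Its bipartite representation $\mathsf{G}_\Gamma$ is then exactly the edge-subdivision of $\mathsf{G}$: each edge $e=(u,v)$ is split by a single hyperedge-node $w_e$ into two darts, one incident on $u$ and one on $v$. Subdivision alters neither the surface nor the faces, so $\mathsf{G}_\Gamma$ embeds on $\Sigma$ with the same face set and genus $g$, while $|\mathsf{W}(\Gamma)|=2|\mathsf{E}(\mathsf{G})|$ and $|\mathsf{E}(\Gamma)|=|\mathsf{E}(\mathsf{G})|$. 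Since every hyperedge has exactly two darts, a valid set of special darts $\mathsf{S}$ is obtained by picking one dart of each edge; the $|\mathsf{W}|-|\mathsf{E}|=|\mathsf{E}(\mathsf{G})|$ remaining non-special darts are in canonical bijection with $\mathsf{E}(\mathsf{G})$, and this bijection labels the qubits.

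With the special basis $B=\mathsf{W}\setminus\mathsf{S}$, I would then verify directly that $H_x=[\partial_1]_B$ and $H_z^{t}=[\partial_2]_B$ are the vertex-edge and face-edge incidence matrices of $\mathsf{G}$. For $\partial_1$: on a size-two hyperedge $e=(u,v)$ with darts $i,j$ the permutation $\tau$ restricts to the transposition $(i\ j)$, so $\partial_1(w_i)=v_{\ni i}+v_{\ni\tau^{-1}(i)}=u+v$, which is the column of $\mathsf{I}_\mathsf{V}(\mathsf{G})$ indexed by $e$ (and $0$ for a loop, as it should be). For $\partial_2$: fixing a face $f$, the left-of-the-dart labelling rule places exactly one of the two darts of $e$ into $f$ when $f$ lies on only one side of $e$, and both when $f$ lies on both sides; since $p$ sends a special dart $w_j$ to $w_{i(e)}$, where $i(e)$ is the non-special dart on the same hyperedge, one obtains $\partial_2(f)=p\big(\sum_{i\in f}w_i\big)=\sum_{e\in\partial f}w_{i(e)}$ with the sum mod $2$, which is the row of $\mathsf{I}_\mathsf{F}(\mathsf{G})$ for $f$ (an edge with $f$ on both sides contributes $0$, matching the fact that it is traversed twice in $\partial f$). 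Hence the stabilizer matrices of the hypermap code and of the surface code of $\mathsf{G}$ coincide.

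A shorter route is to use the machinery already in place: run Algorithm~\ref{proc:hyper2surf} on $\Gamma=\mathsf{G}$ and observe that the output $\overline{\mathsf{G}}_\Gamma$ is $\mathsf{G}$ again. Step~4 adds, for each original edge $e$, two parallel copies of $e$, one into the face on each side of the subdivided edge, carrying the labels of the two darts of $e$; Step~6 deletes all hyperedge-nodes and darts, turning each such pair into a bigon; and Step~7 removes the copy carrying the special label $s_e$, leaving one new edge per original edge, namely the one labelled by the non-special dart $i(e)$ and joining the endpoints of $e$. No hyperedge-node survives, so $\overline{\mathsf{G}}_\Gamma$ has vertex set $\mathsf{V}(\mathsf{G})$, edge set in bijection with $\mathsf{E}(\mathsf{G})$, and the same incidences and facial walks as $\mathsf{G}$; hence $\overline{\mathsf{G}}_\Gamma\cong\mathsf{G}$ as an embedded graph, and Theorem~\ref{th:hmap-surf-equiv} then identifies the canonical hypermap code of $\Gamma$ with the surface code of $\mathsf{G}$.

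The dimension count and the incidence-matrix identification are routine. The step that needs genuine care is reading off from the labelling convention which dart of a subdivided edge belongs to which incident face, and checking that the degenerate configurations behave: a loop must contribute $0$ to $\partial_1$ (it does), and an edge with the same face $f$ on both sides must put both of its darts into $f$ and hence, after projection by $p$, contribute $0$ to $\partial_2$, exactly mirroring the mod-$2$ cancellation of that edge (traversed twice) in the graph boundary. Once this local bookkeeping is settled, both directions of the corollary follow.
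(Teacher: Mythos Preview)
Your proposal is correct and your ``shorter route'' via Algorithm~\ref{proc:hyper2surf} is exactly the paper's own argument: view $\mathsf{G}$ as a hypergraph with all hyperedges of size two, note that each hyperedge becomes a two-edged face (bigon) in $\mathsf{G}''_\Gamma$, and that deleting the special edge from each bigon returns $\mathsf{G}$. Your additional direct verification of $H_x$ and $H_z$ against $\mathsf{I}_\mathsf{V}$ and $\mathsf{I}_\mathsf{F}$, together with the explicit handling of loops and edges bounded twice by the same face, goes beyond the paper's sketch and is a welcome strengthening.
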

\begin{proof}
 We only sketch the proof. Every graph $\mathsf{G}$ can also be 
viewed as a hypergraph, denote it also by $\Gamma$. The bipartite graph representation of this hypergraph is obtained by  
taking the original graph and splitting every edge in $\mathsf{G}$ into two edges and adding a new vertex
for each edge.  Then we can proceed with the construction proposed in \cite{martin13} to obtain a 
hypermap-homology code. At this point we have two codes: a hypermap code and a surface both derived from 
the same graph $\mathsf{G}$. But it is by no means obvious that they are identical. We show that they are
the same code. Note that every hyperedge in the hypermap has only 
two darts incident on it. One of these can be chosen  as a special dart. Now if we apply 
Algorithm~\ref{proc:hyper2surf}, and trace through the various transformations, we find that every hyperedge is transformed to a face with two edges in 
$\mathsf{G}_\Gamma''$. This graph is identical to the graph  obtained from $\mathsf{G}$ where every edge is 
replaced by two edges. Hence, $\overline{\mathsf{G}}_\Gamma$ obtained after the deletion of all the special edges from $\mathsf{G}_\Gamma''$  would be same as the original graph $\mathsf{G}$. Thus every surface code is equivalent to a hypermap-homology code. This together with 
Theorem~\ref{th:hmap-surf-equiv} implies the corollary.
\end{proof}

Combining Corollary~\ref{co:canonical-hmap-surf-equiv} and Theorem~\ref{th:hmap-code-equiv} we obtain  
the following result:
\begin{corollary}\label{co:hmap-surf-equiv}
Any $[[n,k]]$ hypermap code is either identical to a surface code or it can be transformed  to a surface code with the application of  $m\leq n^2$ CNOT gates.
\end{corollary}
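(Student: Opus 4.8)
The plan is to argue by cases on whether the given hypermap code is canonical or noncanonical and, in each case, to invoke one of the two results just established. First I would note that the construction of any hypermap code begins with a choice of special darts $\mathsf{S}$ with $|\mathsf{S}|=|\mathsf{E}|$ and exactly one dart of $\mathsf{S}$ on each hyperedge; once $\mathsf{S}$ is fixed, $B=\mathsf{W}\setminus\mathsf{S}$ is always a legitimate (special) basis for $\mc{W}/\iota(\mc{E})$. Hence to every noncanonical hypermap code, built from some nonspecial basis $B'$, there is a canonically associated canonical code built from the \emph{same} $\mathsf{S}$ together with the basis $B$; moreover $B'=TB$ for a unique invertible binary matrix $T$ of size $n\times n$, since $B$ and $B'$ are both bases of the same $n$-dimensional space $\mc{W}/\iota(\mc{E})$, where $n=|\mathsf{W}|-|\mathsf{E}|$.

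If the code is canonical, Corollary~\ref{co:canonical-hmap-surf-equiv} immediately gives that it is identical to a surface code, which is the first alternative in the statement. If the code is noncanonical, I would apply Theorem~\ref{th:hmap-code-equiv} to the pair $(B,\ B'=TB)$: the canonical code with basis $B$ is carried to the noncanonical code with basis $B'$ by the sequence of at most $n^2$ CNOT gates read off from the factorization $T=R_{j_1}^{i_1}\cdots R_{j_m}^{i_m}$ guaranteed by Lemma~\ref{lm:rank-1up-factors}. Since each $R_j^i$ with $i\neq j$ is its own inverse, and a CNOT gate is likewise its own inverse, running this circuit in the reverse order $R_{j_m}^{i_m}\cdots R_{j_1}^{i_1}$ transforms the noncanonical code back to the canonical one using the same number $m\leq n^2$ of CNOT gates. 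By Corollary~\ref{co:canonical-hmap-surf-equiv} that canonical code is identical to a surface code, so composing the two steps shows the noncanonical code is transformed to a surface code by at most $n^2$ CNOT gates.

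I do not expect a genuine obstacle here: the corollary is essentially a bookkeeping combination of the earlier results, and the only point requiring care is the one isolated in the first paragraph, namely that the ``associated canonical code'' is well defined, i.e.\ that it uses the very same set of special darts so that Theorem~\ref{th:hmap-code-equiv} applies verbatim. A secondary remark worth including is that if $T$ happens to be a permutation matrix then $B'$ is again a special basis, the code is already canonical, and one simply lands in the first alternative; this is consistent with the ``either/or'' phrasing and shows the two alternatives need not be mutually exclusive. Finally I would point out, as in the discussion following Theorem~\ref{th:hmap-code-equiv}, that this CNOT equivalence need not preserve the distance or the locality of the stabilizer generators, so the surface code produced may be geometrically quite unlike the original hypermap code even though it shares its dimension $k=2g$.
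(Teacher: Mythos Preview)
Your proposal is correct and matches the paper's approach exactly: the paper derives this corollary simply by combining Corollary~\ref{co:canonical-hmap-surf-equiv} with Theorem~\ref{th:hmap-code-equiv}, and explicitly notes (as you do) that the CNOT sequence needed is the reverse of the one in Theorem~\ref{th:hmap-code-equiv}. Your additional remarks about the permutation-matrix case and the failure to preserve distance/locality also mirror observations made in the surrounding discussion of the paper.
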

The CNOT gates are required only if the hypermap code is noncanonical. 
The sequence in which the CNOT gates are applied in the transformation in Corollary~\ref{co:hmap-surf-equiv} is reverse of the sequence in Theorem~\ref{th:hmap-code-equiv}.  In other words it is the sequence of CNOT gates required to transform the noncanonical hypermap code into a canonical code.  
While a noncanonical code may be transformed to a surface code, it is not necessary that it has the same 
parameters as the resulting surface code. In particular, the distance can change. Therefore some noncanonical 
codes may be realized only from the embedding of hypergraphs and not by the embedding of graphs. 


\section{Conclusion and Discussion}\label{sec:disc}
Our results imply that  canonical hypermap-homology codes cannot
improve upon the parameters of surface codes.  Noncanonical hypermap codes may have better parameters than 
surface codes. An interesting problem in this context is to construct noncanonical hypermap codes that have better parameters than  surface codes. In such a construction, we also want to be able to preserve the locality of the stabilizer generators. Understanding the distance and decoding of the noncanonical codes
merits further study. Hypermap codes provide a  different 
perspective on topological  codes which might yield new insights into their properties and potentially lead to better quantum codes and alternate decoding algorithms for topological codes.  

Hypermap-homology codes introduce in a very concrete fashion the use of standard hypermap homology into the study of quantum codes.  The use of hypermap homology  in the construction of quantum codes is an important development and  its applications are  yet to be fully explored in the context of quantum codes. 
Other  results such those in \cite{tillich09,kovalev12} suggest that hypergraphs offer a fertile ground 
for construction of new quantum codes. The use of hypergraphs has been fruitful in subsystem codes as well and  it would be  interesting to study homology of hypermaps in that context also \cite{suchara10}. 
Homology of hypergraphs is considered with a different perspective in \cite{suchara10,pra13}. It seems that the notion of homology used therein does not entirely coincide with that used in \cite{martin13}. It would be an interesting problem to relate the two.


\def\cprime{$'$}

\end{document}